\renewcommand{\theenumii}{\@roman\c@enumii}
\newtheorem{lemma}{Lemma}
\newtheorem{theorem}{Theorem}
\begin{document}

\title{On the relation between Sion's minimax theorem and existence of Nash equilibrium in asymmetric multi-players zero-sum game with only one alien\thanks{This work was supported by Japan Society for the Promotion of Science KAKENHI Grant Number 15K03481 and 18K01594.}}

\author{%
Atsuhiro Satoh\thanks{atsatoh@hgu.jp}\\[.01cm]
Faculty of Economics, Hokkai-Gakuen University,\\[.02cm]
Toyohira-ku, Sapporo, Hokkaido, 062-8605, Japan,\\[.01cm]
\textrm{and} \\[.1cm]
Yasuhito Tanaka\thanks{yasuhito@mail.doshisha.ac.jp}\\[.01cm]
Faculty of Economics, Doshisha University,\\
Kamigyo-ku, Kyoto, 602-8580, Japan.\\}

\date{}

\maketitle
\thispagestyle{empty}

\vspace{-1.5cm}

\begin{abstract}
We consider the relation between Sion's minimax theorem for a continuous function and a Nash equilibrium in an asymmetric multi-players zero-sum game in which only one player is different from other players, and the game is symmetric for the other players. Then,
\begin{enumerate}
	\item The existence of a Nash equilibrium,  which is symmetric for players other than one player, implies Sion's minimax theorem for pairs of this player and one of other players with symmetry for the other players.
	\item Sion's minimax theorem for pairs of one player and one of other players with symmetry for the other players implies the existence of a Nash equilibrium which is symmetric for the other players.
\end{enumerate}
Thus, they are equivalent. %If a zero-sum game is fully asymmetric, maximin strategies and minimax strategies of players do not correspond to Nash equilibrium strategies. However, if it is symmetric for players other than one player, the maximin strategies for those players and the minimax strategy of one player (for the other players) constitute a Nash equilibrium.

\end{abstract}

\begin{description}
	\item[Keywords:] multi-players zero-sum game, one alien, Nash equilibrium, Sion's minimax theorem
\end{description}

\begin{description}
	\item[JEL Classification:] C72
\end{description}

\section{Introduction}

We consider the relation between Sion's minimax theorem for a continuous function and a Nash equilibrium in an asymmetric multi-players zero-sum game in which only one player is different from other players, and the game is symmetric for the other players. We will show the following results.
\begin{enumerate}
	\item The existence of a Nash equilibrium,  which is symmetric for players other than one player, implies Sion's minimax theorem for pairs of this player and one of other players with symmetry for the other players.
	\item Sion's minimax theorem for pairs of one player and one of other players with symmetry for the other players implies the existence of a Nash equilibrium which is symmetric for the other players.
\end{enumerate}
Thus, they are equivalent. Symmetry for the other players means that those players (players other than one player) have the same payoff function and strategy space, and so their equilibrium strategies, maximin strategies and minimax strategies are the same.

An example of such a game is a relative profit maximization game in a Cournot oligopoly. Suppose that there are four firms, A, B, C and D in an oligopolistic industry. Let $\bar{\pi}_A$, $\bar{\pi}_B$, $\bar{\pi}_C$ and  $\bar{\pi}_D$ be the absolute profits of the firms. Then, their relative profits are
\begin{align*}
&\pi_A=\bar{\pi}_A-\frac{1}{3}(\bar{\pi}_B+\bar{\pi}_C+\bar{\pi}_D),\ \pi_B=\bar{\pi}_B-\frac{1}{3}(\bar{\pi}_A+\bar{\pi}_C+\bar{\pi}_D),\\
&\pi_C=\bar{\pi}_C-\frac{1}{3}(\bar{\pi}_A+\bar{\pi}_B+\bar{\pi}_D),\ \pi_D=\bar{\pi}_D-\frac{1}{3}(\bar{\pi}_A+\bar{\pi}_B+\bar{\pi}_C).
\end{align*}
We see
\[\pi_A+\pi_B+\pi_C+\pi_D=\bar{\pi}_A+\bar{\pi}_B+\bar{\pi}_C+\pi_D-(\bar{\pi}_A+\bar{\pi}_B+\bar{\pi}_C+\pi_C)=0.\]
Thus, the relative profit maximization game in a Cournot oligopoly is a zero-sum game\footnote{About relative profit maximization under imperfect competition please see \cite{mm}, \cite{ebl2}, \cite{eb2}, \cite{st}, \cite{eb1}, \cite{ebl1} and \cite{redondo}}. If the oligopoly is fully asymmetric because the demand function is not symmetric (in a case of differentiated goods) or firms have different cost functions (in both homogeneous and differentiated goods cases), maximin strategies and minimax strategies of firms do not correspond to Nash equilibrium strategies. However, if the oligopoly is symmetric for three firms in the sense that the demand function is symmetric and those firms have the same cost function, the maximin strategies of those firms with the corresponding minimax strategy of one firm (for the other players) constitute a Nash equilibrium which is symmetric for the three firms. In Appendix we present an example of a four-firms relative profit maximizing oligopoly. We see from this example that with two aliens the equivalence result does not hold.

\section{The model and Sion's minimax theorem}

Consider a multi-players zero-sum game with only one alien. There are $n$ players $i=1, \dots, n$, $n\geq 3$.  The strategic variables for the players are $s_1$, $s_2$, \dots, $s_n$, and ($s_1, s_2, \dots, s_n)\in S_1\times S_2\times \dots \times S_n$. $S_1$, $S_2,\ \dots,\ S_n$ are convex and compact sets in linear topological spaces. The payoff function of each player is $u_i(s_1, s_2,\dots, s_n), i=1, 2, \dots, n$. We assume 
\begin{quote}
$u_i$'s for $i=1, 2, \dots, n$ are continuous real-valued functions on $S_1\times S_2\times \dots \times S_n$, quasi-concave on $S_i$ for each $s_j\in S_j,\ j\neq i$, and quasi-convex on $S_j$ for $j\neq i$ for each $s_i\in S_i$.
\end{quote}
$n$ players are partitioned into two groups. Group 1 and Group $n$. Group 1 includes $n-1$ players, Players 1, 2, $\dots$, $n-1$, and Group $n$ includes only Player $n$. In Group 1 $n-1$ players are symmetric in the sense that  they have the same payoff function and strategy space. Thus, their equilibrium strategies, maximin strategies and minimax strategies are the same. Only Player $n$ has a different payoff function and a strategy space. Its equilibrium strategy may be different from those for the other players.

Since the game is a zero-sum game, we have
\begin{equation}
u_1(s_1, s_2,\dots,  s_n)+u_2(s_1, s_2,\dots,  s_n)+\dots, u_n(s_1, s_2,\dots,  s_n)=0,\label{e1}
\end{equation}
for given $(s_1, s_2,\dots, s_n)$.

Sion's minimax theorem (\cite{sion}, \cite{komiya}, \cite{kind}) for a continuous function is stated as follows.
\begin{lemma}
Let $X$ and $Y$ be non-void convex and compact subsets of two linear topological spaces, and let $f:X\times Y \rightarrow \mathbb{R}$ be a function, that is continuous and quasi-concave in the first variable and continuous and quasi-convex in the second variable. Then
\[\max_{x\in X}\min_{y\in Y}f(x,y)=\min_{y\in Y}\max_{x\in X}f(x,y).\] \label{l1}
\end{lemma}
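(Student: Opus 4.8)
The plan is to prove the two inequalities $\max_{x\in X}\min_{y\in Y}f(x,y)\le\min_{y\in Y}\max_{x\in X}f(x,y)$ and its reverse separately; only the reverse requires real work. First I would record that all four extrema are attained: since $Y$ is compact and $f$ is continuous, $g(x):=\min_{y\in Y}f(x,y)$ is a well-defined continuous function on the compact set $X$ --- continuity of $g$ is the usual consequence of joint continuity of $f$ together with compactness of $Y$ --- so $\max_{x}g(x)$ is attained, and symmetrically $h(y):=\max_{x\in X}f(x,y)$ is continuous on the compact set $Y$. The first inequality is then immediate: for every $x\in X$ and $y'\in Y$ we have $\min_{y}f(x,y)\le f(x,y')\le\max_{x'}f(x',y')$, and taking the maximum over $x$ and then the minimum over $y'$ gives $\max_{x}\min_{y}f\le\min_{y}\max_{x}f$.

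For the reverse inequality, put $v:=\min_{y}\max_{x}f(x,y)$ and fix an arbitrary real $\alpha<v$. It suffices to produce a point $x_{0}\in X$ with $f(x_{0},y)\ge\alpha$ for every $y\in Y$, since then $\min_{y}f(x_{0},y)\ge\alpha$, hence $\max_{x}\min_{y}f\ge\alpha$, and as $\alpha<v$ was arbitrary this yields $\max_{x}\min_{y}f\ge v$. For each $y\in Y$ set $C_{y}:=\{x\in X:f(x,y)\ge\alpha\}$. Quasi-concavity and continuity of $f$ in the first variable make each $C_{y}$ a closed convex subset of $X$, and $C_{y}\ne\emptyset$ because $\max_{x}f(x,y)\ge v>\alpha$. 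What remains is to show $\bigcap_{y\in Y}C_{y}\ne\emptyset$; since $X$ is compact and the sets $C_{y}$ are closed, by the finite intersection property this reduces to showing $C_{y_{1}}\cap\cdots\cap C_{y_{m}}\ne\emptyset$ for every finite family $y_{1},\dots,y_{m}\in Y$.

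The finite statement I would prove by induction on $m$, and this is where convexity of $Y$ and quasi-convexity of $f$ in the second variable enter. The case $m=1$ is the nonemptiness just noted. The crucial case is $m=2$: if $y_{1},y_{2}\in Y$ are such that $\max_{x}f(x,y)>\alpha$ for every $y$ on the segment $[y_{1},y_{2}]\subseteq Y$ (which holds in our situation, since $\max_{x}f(x,y)\ge v>\alpha$ for all $y$), then $C_{y_{1}}\cap C_{y_{2}}\ne\emptyset$. One proves this by studying $f$ along the segment, writing $y_{t}=(1-t)y_{1}+ty_{2}$ and using the inequality $f(x,y_{t})\le\max\{f(x,y_{1}),f(x,y_{2})\}$ from quasi-convexity in $y$ together with convexity of the superlevel sets $C_{y_{t}}$ from quasi-concavity in $x$: these nonempty convex sets $C_{y_{t}}$ vary ``without jumps'' as $t$ ranges over $[0,1]$, so a connectedness (bisection / intermediate-value) argument forces $C_{y_{1}}$ and $C_{y_{2}}$ to meet. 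The inductive step from $m$ to $m+1$ sets combines the inductive hypothesis with the two-point conclusion applied to suitable auxiliary points on segments in $Y$; the detailed bookkeeping is carried out as in \cite{sion}, \cite{komiya} and \cite{kind}.

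I expect the $m=2$ case to be the main obstacle. In a general linear topological space two disjoint closed convex sets need not be separated by a continuous linear functional, so the naive approach of separating $C_{y_{1}}$ from $C_{y_{2}}$ and reading off a direction along which $f$ is small is unavailable. The essential idea, due to Sion and made elementary by Komiya, is precisely to replace this separation argument by the one-dimensional connectedness argument along the connecting segment sketched above. Once the finite intersection property is secured, compactness of $X$ produces the point $x_{0}\in\bigcap_{y\in Y}C_{y}$, and the theorem follows as explained.
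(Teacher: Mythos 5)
The paper offers no proof of Lemma \ref{l1} at all: it is invoked as a known theorem, with the proof deferred to \cite{sion}, \cite{komiya} and \cite{kind}, so there is no internal argument to compare yours against. Your outline follows the standard elementary route (Komiya's, as streamlined by Kindler): the trivial inequality $\max\min\le\min\max$; for the reverse, fix $\alpha<v$, form the closed convex superlevel sets $C_y=\{x\in X: f(x,y)\ge\alpha\}$, reduce $\bigcap_{y\in Y}C_y\neq\emptyset$ to finite families via compactness of $X$ and the finite intersection property, and settle finite families by induction, with a connectedness argument along a segment in $Y$ replacing any separation argument. That is the correct architecture, and your remark on why separation of convex sets is unavailable in a general linear topological space is exactly the right motivation.

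As it stands, though, the proposal leaves the theorem's actual content unproved. The $m=2$ case is the whole difficulty, and ``the nonempty convex sets $C_{y_t}$ vary without jumps, so connectedness forces $C_{y_1}$ and $C_{y_2}$ to meet'' is a statement of intent, not an argument: the genuine proof introduces a second level $\beta$ with $\alpha<\beta<v$, supposes $C_{y_1}\cap C_{y_2}=\emptyset$, uses quasi-convexity in $y$ together with convexity (hence connectedness) of the $\beta$-superlevel set at each $y_t$ to conclude that this set lies entirely inside the $\alpha$-superlevel set for $y_1$ or entirely inside that for $y_2$, and then contradicts connectedness of $[0,1]$ after checking, via the continuity hypotheses, that the two resulting parameter sets are closed, disjoint and nonempty. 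None of this is carried out in your text. The induction step is also misdescribed: the standard step does not invoke ``auxiliary points on segments in $Y$'' but restricts the problem to $X'=\{x\in X: f(x,y_{m+1})\ge\alpha\}$, which is compact and convex and inherits the hypotheses, and applies the two-point lemma there. Since you then delegate the ``detailed bookkeeping'' to \cite{sion}, \cite{komiya} and \cite{kind}, your proposal is in substance the same appeal to the literature that the paper itself makes, wrapped in a correct but incomplete outline; to stand as a proof it needs the $m=2$ argument and the induction step written out. A smaller point: your continuity claim for $g(x)=\min_{y\in Y}f(x,y)$, needed for attainment of the outer maximum, uses joint continuity of $f$, which is a stronger reading than the separate continuity literally stated in the lemma; Sion's theorem proper assumes only semicontinuity in each variable and accordingly concludes with $\sup$ and $\inf$ rather than $\max$ and $\min$.
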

We follow the description of this theorem in \cite{kind}.

Let $s_j$'s for $j\neq i, n;\ i,j \in \{1, 2, \dots, n-1\}$ be given. Then, $u_i(s_1, s_2, \dots, s_n)$ is a function of $s_i$ and $s_n$. We can apply Lemma \ref{l1} to such a situation, and get the following equation.
\begin{equation}
\max_{s_i\in S_i}\min_{s_n\in S_n}u_i(s_1, s_2, \dots, s_n)=\min_{s_n\in S_n}\max_{s_i\in S_i}u_i(s_1, s_2, \dots, s_n).\label{as0}
\end{equation}
Note that we do not require
\[\max_{s_n\in S_n}\min_{s_i\in S_i}u_n(s_1, s_2, \dots, s_n)=\min_{s_i\in S_i}\max_{s_n\in S_n}u_n(s_1, s_2, \dots, s_n),\]
nor
\[\max_{s_i\in S_i}\min_{s_j\in S_j}u_i(s_1, s_2, \dots, s_n)=\min_{s_j\in S_j}\max_{s_i\in S_i}u_i(s_1, s_2, \dots, s_n),\ j\neq i;\ i,j\in \{1, 2, \dots, n-1\}.\]

We assume that $\arg\max_{s_i\in S_i}\min_{s_n\in S_n}u_i(s_1, s_2, \dots, s_n)$ and $\arg\min_{s_n\in S_n}\max_{s_i\in S_i}u_i(s_1, s_2, \dots, s_n)$ are unique, that is, single-valued. By the maximum theorem they are continuous in $s_j$'s,$\ j\neq i, n$. {Also, throughout this paper we assume that the maximin strategy and the minimax strategy of players in any situation are unique, and the best responses of players in any situation are unique.}

Let $s_j=s$ for all $j\neq i,\ j\in \{1,2, \dots, n-1\}$. Consider the following function.
\[s\rightarrow \arg\max_{s_i\in S_i}\min_{s_n\in S_n}u_i(s, \dots, s_i, \dots, s, \dots, s_n).\]
Since $u_i$ is continuous, $S_i$ and $S_n$ are compact, and all $S_i$'s are the same, this function is also continuous with respect to $s$. Thus, there exists a fixed point. Denote it by $\tilde{s}$. $\tilde{s}$ satisfies
\begin{equation}
\arg\max_{s_i\in S_i}\min_{s_n\in S_n}u_i(\tilde{s}, \dots, s_i, \dots, \tilde{s}, \dots, s_n)=\tilde{s}.\label{fix1}
\end{equation}
From (\ref{as0}) we have
\begin{equation}
\max_{s_i\in S_i}\min_{s_n\in S_n}u_i(\tilde{s}, \dots, s_i, \dots, \tilde{s}, \dots, s_n)=\min_{s_n\in S_n}\max_{s_i\in S_i}u_i(\tilde{s}, \dots, s_i, \dots, \tilde{s}, \dots, s_n).\label{fix2}
\end{equation}
From symmetry for Players 1, 2, $\dots$, $n-1$, $\tilde{s}$ satisfies (\ref{fix1}) and (\ref{fix2}) for all $i\in \{1, 2, \dots, n-1\}$.

\section{The main results}

Consider a Nash equilibrium of an $n$-players zero-sum game. Let $s_i^*$'s, $i\in \{1, 2, \dots, n-1\}$ and $s^*_n$, be the values of $s_i$'s which, respectively, maximize $u_i$'s. Then,
\begin{equation*}
u_i(s_1^*,\dots, s_i^*, \dots, s_n^*)\geq u_i(s_1^*, \dots, s_i, \dots, s_n^*)\ \mathrm{for\ all}\ s_i\in S_i,\ i\in \{1, 2, \dots, n-1\},%\label{e1}
\end{equation*}
\begin{equation*}
u_n(s_1^*, s_2^*, \dots, s_{n-1}^*, s_n^*)\geq u_n(s_1^*, s_2^*, \dots, s_{n-1}^*,s_n)\ \mathrm{for\ all}\ s_n\in S_n.%\label{e1}
\end{equation*}
They mean
\[\arg\max_{s_i\in S_i}u_i(s_1^*, \dots, s_i, \dots, s_n^*)=s_i^*,\ i\in \{1, 2, \dots, n-1\},\]
and
\[\arg\max_{s_n\in S_n}u_n(s_1^*, \dots, s_i^*, \dots, s_n)=s_n^*.\]

We assume that the Nash equilibrium is symmetric in Group 1 that is, it is symmetric for Players 1, 2, \dots, $n-1$. Then, $s_i^*$'s are the same and $u_i(s_1^*, \dots, s_i^*, \dots, s_n^*)$'s are equal for all $i\in \{1, 2, \dots, n-1\}$. Also we have
\[u_i(s^*_1, \dots, s_i^*, \dots, s_j^*, \dots, s_n)=u_j(s^*_1, \dots, s_i^*, \dots, s_j^*, \dots, s_n),\ j\neq i;\ i, j\in \{1, 2, \dots, n-1\}.\]
Since the game is zero-sum,
\[\sum_{i=1}^{n-1}u_i(s^*_1, \dots, s_i^*, \dots, s_j^*, \dots, s_n)=(n-1)u_i(s^*_1, \dots, s_i^*, \dots, s_j^*, \dots, s_n)=-u_n(s^*_1, \dots, s_i^*, \dots, s_j^*, \dots, s_n).\]
Thus,
\[\arg\min_{s_n\in S_n}u_i(s^*_1, \dots, s_i^*, \dots, s_j^*, \dots, s_n)=\arg\max_{s_n\in S_n}u_n(s^*_1, \dots, s_i^*, \dots, s_j^*, \dots, s_n)=s_n^*,\]
This implies
\begin{align*}
&\min_{s_n\in S_n}u_i(s^*_1, \dots, s_i^*, \dots, s_j^*, \dots, s_n)=u_i(s^*_1, \dots, s_i^*, \dots, s_j^*, \dots, s_n^*)\\
&=\max_{s_i\in S_i}u_i(s^*_1, \dots, s_i, \dots, s_j^*, \dots, s_n^*).
\end{align*}

First we show the following theorem.
\begin{theorem}
The existence of a Nash equilibrium, which is symmetric in Group 1, implies Sion's minimax theorem for pairs of a player in Group 1 and Player $n$ with symmetry in Group 1.
\label{t1}
\end{theorem}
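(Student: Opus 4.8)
\section*{Proof proposal}

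The plan is to use the symmetric-in-Group~1 Nash equilibrium itself as the symmetric configuration at which Sion's equality is to be checked, and to identify the value of the associated two-variable game with the equilibrium payoff. Write $\bar s$ for the common Group~1 equilibrium strategy $s_1^*=\dots=s_{n-1}^*$ and $\bar s_n$ for $s_n^*$, fix any $i\in\{1,\dots,n-1\}$, and freeze every Group~1 player other than $i$ at $\bar s$. The relations displayed just before the theorem, once symmetry $s_j^*=\bar s$ is used, read
\[\min_{s_n\in S_n}u_i(\bar s,\dots,\bar s,\dots,s_n)=u_i(\bar s,\dots,\bar s,\bar s_n)=\max_{s_i\in S_i}u_i(\bar s,\dots,s_i,\dots,\bar s,\bar s_n),\]
with the inner minimum attained at $s_n=\bar s_n$ and the inner maximum attained at $s_i=\bar s$.

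First I would derive the two one-sided bounds that tie the value of the pair game to the equilibrium payoff. Inserting $s_i=\bar s$ into the outer maximum gives $\max_{s_i\in S_i}\min_{s_n\in S_n}u_i(\bar s,\dots,s_i,\dots,\bar s,\dots,s_n)\ge u_i(\bar s,\dots,\bar s,\bar s_n)$, and inserting $s_n=\bar s_n$ into the outer minimum gives $\min_{s_n\in S_n}\max_{s_i\in S_i}u_i(\bar s,\dots,s_i,\dots,\bar s,\dots,s_n)\le u_i(\bar s,\dots,\bar s,\bar s_n)$. Combining these with the elementary inequality $\max_x\min_y\le\min_y\max_x$ (or, equivalently, with Lemma~\ref{l1} applied to the pair $(s_i,s_n)$, which is legitimate since $u_i$ is continuous, quasi-concave in $s_i$ and quasi-convex in $s_n$ and $S_i,S_n$ are convex and compact) collapses the chain
\[u_i(\bar s,\dots,\bar s,\bar s_n)\le\min_{s_n}\max_{s_i}u_i=\max_{s_i}\min_{s_n}u_i\le u_i(\bar s,\dots,\bar s,\bar s_n)\]
to equalities, so the maximin and the minimax of $u_i$ over the pair $(s_i,s_n)$ coincide and both equal the equilibrium payoff.

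Next I would pin down the optimal strategies so that the ``with symmetry in Group~1'' clause is honestly met. Because $\min_{s_n}u_i(\bar s,\dots,\bar s,\dots,s_n)=u_i(\bar s,\dots,\bar s,\bar s_n)=\max_{s_i}\min_{s_n}u_i$, the point $s_i=\bar s$ attains the outer maximum, hence by the standing uniqueness assumption on maximin strategies $\arg\max_{s_i}\min_{s_n}u_i(\bar s,\dots,s_i,\dots,\bar s,\dots,s_n)=\bar s$; symmetrically $s_n=\bar s_n$ attains the outer minimum of $\max_{s_i}u_i$, so $\arg\min_{s_n}\max_{s_i}u_i(\bar s,\dots,s_i,\dots,\bar s,\dots,s_n)=\bar s_n$. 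Thus $\bar s$ is a fixed point of the map $s\mapsto\arg\max_{s_i}\min_{s_n}u_i(s,\dots,s_i,\dots,s,\dots,s_n)$ considered before the theorem, the maximin strategies of all Group~1 players coincide and equal $\bar s$, and Player~$n$'s minimax strategy is $\bar s_n$. Finally, symmetry for Players $1,\dots,n-1$ means the whole argument is valid for every $i\in\{1,\dots,n-1\}$ with the same $\bar s$ and $\bar s_n$, which is precisely Sion's minimax theorem for pairs of a Group~1 player and Player~$n$ with symmetry in Group~1.

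I expect the main obstacle to be purely organizational: keeping straight which coordinates are frozen at $\bar s$ and which are varied in each $\max$/$\min$, and invoking the uniqueness hypotheses at exactly the points where the equilibrium strategy must be recognized as the maximin or minimax strategy (and hence as the symmetric fixed point). The substantive content of the proof is just the familiar observation that a Nash equilibrium payoff is squeezed between the maximin value and the minimax value, and once these two are known to agree the conclusion is immediate.
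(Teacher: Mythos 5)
Your proposal is correct and follows essentially the same route as the paper: it squeezes the equilibrium payoff $u_i(\bar s,\dots,\bar s,\bar s_n)$ between $\max_{s_i}\min_{s_n}u_i$ and $\min_{s_n}\max_{s_i}u_i$ using the Nash best-response property together with the zero-sum/symmetry identification $\arg\min_{s_n}u_i=\arg\max_{s_n}u_n=\bar s_n$, and then invokes the uniqueness assumptions to identify the maximin and minimax strategies with $\bar s$ and $\bar s_n$ for every $i\in\{1,\dots,n-1\}$. The only blemish is notational: the displayed chain asserts $\min_{s_n}\max_{s_i}u_i=\max_{s_i}\min_{s_n}u_i$ in the middle rather than writing $u_i\le\max\min\le\min\max\le u_i$, but your accompanying text makes clear the equality is obtained from the two bounds plus the elementary inequality, exactly as in the paper.
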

\begin{proof}
\begin{enumerate}
\item Let $(s^*_1, s^*_2, \dots, s^*_n)$ be a Nash equilibrium of a multi-players zero-sum game. This means
\begin{align}
&\min_{s_n\in S_n}\max_{s_i\in S_i}u_i(s_1^*, \dots, s_i, \dots, s_n)\leq \max_{s_i\in S_i}u_i(s_1^*, \dots, s_i, \dots, s_n^*) \label{e3}\\
=&\min_{s_n\in S_n}u_i(s_1^*, \dots, s_i^*, \dots, s_n)\leq \max_{s_i\in S_i}\min_{s_n\in S_n}u_i(s_1^*, \dots, s_i, \dots, s_n).\notag
\end{align}
for Player $i,\ i\in \{1, 2, \dots, n-1\}$.

On the other hand, since
\[\min_{s_n\in S_n}u_i(s_1^*, \dots, s_i, \dots, s_n)\leq u_i(s_1^*, \dots, s_i, \dots, s_n),\]
we have
\[\max_{s_i\in S_i}\min_{s_n\in S_n}u_i(s_1^*, \dots, s_i, \dots, s_n)\leq \max_{s_i\in S_i}u_i(s_1^*, \dots, s_i, \dots, s_n).\]
This inequality holds for any $s_n$. Thus,
\[\max_{s_i\in S_i}\min_{s_n\in S_n}u_i(s_1^*, \dots, s_i, \dots, s_n)\leq \min_{s_n\in S_n}\max_{s_i\in S_i}u_i(s_1^*, \dots, s_i, \dots, s_n).\]
With (\ref{e3}), we obtain
\begin{equation}
\max_{s_i\in S_i}\min_{s_n\in S_n}u_i(s_1^*, \dots, s_i, \dots, s_n)=\min_{s_n\in S_n}\max_{s_i\in S_i}u_i(s_1^*, \dots, s_i, \dots, s_n).\label{t1-1}
\end{equation}
(\ref{e3}) and (\ref{t1-1}) imply
\[\max_{s_i\in S_i}\min_{s_n\in S_n}u_i(s_1^*, \dots, s_i, \dots, s_n)=\max_{s_i\in S_i}u_i(s_1^*, \dots, s_i, \dots, s_n^*),\]
\[\min_{s_n\in S_n}\max_{s_i\in S_i}u_i(s_1^*, \dots, s_i, \dots, s_n)=\min_{s_n\in S_n}u_i(s_1^*, \dots, s^*_i, \dots, s_n).\]
From
\[\min_{s_n\in S_n}u_i(s_1^*, \dots, s_i, \dots, s_n)\leq u_i(s_1^*, \dots, s_i, \dots, s_n^*),\]
and
\[\max_{s_i\in S_i}\min_{s_n\in S_n}u_i(s_1^*, \dots, s_i, \dots, s_n)=\max_{s_i\in S_i}u_i(s_1^*, \dots, s_i, \dots, s_n^*),\]
we have
\[\arg\max_{s_i\in S_i}\min_{s_n\in S_n}u_i(s_1^*, \dots, s_i, \dots, s_n)=\arg\max_{s_i\in S_i}u_i(s_1^*, \dots, s_i, \dots, s_n^*)=s_i^*,\]
for all $i\in \{1, 2, \dots, n-1\}$. $s_i^*$'s  are equal for all $i\in \{1, 2, \dots, n-1\}$.

Also, from
\[\max_{s_i\in S_i}u_i(s_1^*, \dots, s_i, \dots, s_n)\geq u_i(s_1^*, \dots, s_i^*, \dots, s_n),\]
and
\[\min_{s_n\in S_n}\max_{s_i\in S_i}u_i(s_1^*, \dots, s_i, \dots, s_n)=\min_{s_n\in S_n}u_i(s_1^*, \dots, s_i^*, \dots, s_n),\]
we get
\[\arg\min_{s_n\in S_n}\max_{s_i\in S_i}u_i(s_1^*, \dots, s_i, \dots, s_n)=\arg\min_{s_n\in S_n}u_i(s_1^*, \dots, s^*_i, \dots, s_n)=s_n^*,\]
for all $i\in \{1, 2, \dots, n-1\}$.
\end{enumerate}
\end{proof}

Next we show the following theorem. 
\begin{theorem}
Sion's minimax theorem with symmetry in Group 1 implies the existence of a Nash equilibrium which is symmetric in Group 1.
\end{theorem}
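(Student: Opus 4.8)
The plan is to assemble the equilibrium from the symmetric fixed point $\tilde{s}$ constructed above (equations (\ref{fix1})--(\ref{fix2})). For each $i\in\{1,2,\dots,n-1\}$, set $g_i(s_i,s_n):=u_i(\tilde{s},\dots,s_i,\dots,\tilde{s},\dots,s_n)$, the payoff of Player $i$ when every other player of Group 1 plays $\tilde{s}$. By symmetry in Group 1 this reduction does not depend on $i$, so I will write it simply as $g$. It is continuous, quasi-concave in its first argument and quasi-convex in its second, so Sion's minimax theorem applies; together with (\ref{fix1}) and (\ref{fix2}) this gives a common value
\[v:=\max_{s_i\in S_i}\min_{s_n\in S_n}g(s_i,s_n)=\min_{s_n\in S_n}\max_{s_i\in S_i}g(s_i,s_n),\qquad \arg\max_{s_i\in S_i}\min_{s_n\in S_n}g(s_i,s_n)=\tilde{s}.\]
Let $\tilde{s}_n:=\arg\min_{s_n\in S_n}\max_{s_i\in S_i}g(s_i,s_n)$, which is single-valued by the standing uniqueness assumption. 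I claim that $(\tilde{s},\tilde{s},\dots,\tilde{s},\tilde{s}_n)$ is a Nash equilibrium, and it is symmetric in Group 1 by construction.

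First I would record the saddle-point inequalities. From $\min_{s_n\in S_n}g(\tilde{s},s_n)=v$ we get $g(\tilde{s},s_n)\geq v$ for all $s_n\in S_n$; from $\max_{s_i\in S_i}g(s_i,\tilde{s}_n)=v$ we get $g(s_i,\tilde{s}_n)\leq v$ for all $s_i\in S_i$; evaluating both at $(\tilde{s},\tilde{s}_n)$ gives $g(\tilde{s},\tilde{s}_n)=v$. Next I would check optimality for a player of Group 1: if Player $i$ deviates to $s_i$ while the others keep $(\tilde{s},\dots,\tilde{s},\tilde{s}_n)$, its payoff becomes $g(s_i,\tilde{s}_n)\leq v=g(\tilde{s},\tilde{s}_n)$, so $\tilde{s}$ is a best response. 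Finally I would check optimality for Player $n$: when all of Group 1 plays $\tilde{s}$, symmetry gives $u_i(\tilde{s},\dots,\tilde{s},s_n)=g(\tilde{s},s_n)$ for every $i\in\{1,\dots,n-1\}$, so the zero-sum condition (\ref{e1}) yields
\[u_n(\tilde{s},\dots,\tilde{s},s_n)=-\sum_{i=1}^{n-1}u_i(\tilde{s},\dots,\tilde{s},s_n)=-(n-1)g(\tilde{s},s_n)\leq -(n-1)v=u_n(\tilde{s},\dots,\tilde{s},\tilde{s}_n),\]
for every $s_n\in S_n$, so $\tilde{s}_n$ is a best response for Player $n$. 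Putting the three parts together, $(\tilde{s},\dots,\tilde{s},\tilde{s}_n)$ is a Nash equilibrium that is symmetric in Group 1.

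The hard part will not be any single deep step but the bookkeeping forced by the asymmetry of the game: one must be careful that the two-variable reductions $g_i$ genuinely coincide for all $i\in\{1,\dots,n-1\}$ (this is precisely where symmetry in Group 1 is used, exactly as for (\ref{fix1})--(\ref{fix2})), that the zero-sum identity is invoked at the fully symmetric Group-1 profile where the $u_i$'s agree, and that the uniqueness assumptions enter only to make $\tilde{s}$ and $\tilde{s}_n$ single-valued so that the fixed-point construction and the definition of $\tilde{s}_n$ make sense. Beyond that, the argument is the classical equivalence between a saddle point of $g$ and its minimax value $v$.
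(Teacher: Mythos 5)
Your proof is correct, and it rests on the same three ingredients as the paper's: the symmetric fixed point $\tilde{s}$ from (\ref{fix1})--(\ref{fix2}), Sion's theorem applied to the two-variable reduction, and the zero-sum identity at the fully symmetric Group-1 profile to handle Player $n$. Where you diverge is in the verification. The paper takes Player $n$'s equilibrium strategy to be $\hat{s}_n=\arg\min_{s_n}u_i(\tilde{s},\dots,\tilde{s},\dots,s_n)=\arg\max_{s_n}u_n(\tilde{s},\dots,\tilde{s},\dots,s_n)$ and then works through chains of argmax/argmin equalities ((\ref{t2-2})--(\ref{t2-4})), in particular first proving that this $\hat{s}_n$ coincides with the minimax strategy $\arg\min_{s_n}\max_{s_i}u_i$, a step that leans on the standing uniqueness assumptions. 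You instead take the minimax strategy $\tilde{s}_n$ itself as the candidate, extract the saddle-point inequalities $g(\tilde{s},s_n)\geq v\geq g(s_i,\tilde{s}_n)$ with $g(\tilde{s},\tilde{s}_n)=v$, and read off both best-response conditions directly: Group-1 optimality is immediate from $g(s_i,\tilde{s}_n)\leq v$, and Player $n$'s optimality follows from $u_n(\tilde{s},\dots,\tilde{s},s_n)=-(n-1)g(\tilde{s},s_n)\leq-(n-1)v$. Your route is the classical saddle-point argument: it is shorter, avoids having to identify $\hat{s}_n$ with the minimax strategy, and uses uniqueness only to make $\tilde{s}_n$ a well-defined point (a measurable selection of the argmin would do just as well); the paper's route, in exchange, makes explicit that Player $n$'s equilibrium strategy is simultaneously its best response, the minimizer of each $u_i(\tilde{s},\dots,\tilde{s},\cdot)$, and the minimax strategy. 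One point to keep explicit in a final write-up: the claim that the reduced function $g_i$ is independent of $i$ is exactly the symmetry-in-Group-1 hypothesis (identical, exchangeable payoff functions), the same fact the paper invokes when asserting that $\tilde{s}$ satisfies (\ref{fix1}) and (\ref{fix2}) for every $i\in\{1,\dots,n-1\}$.
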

\begin{proof}
Let $\tilde{s}$ be a value of $s_j$'s, $j\neq i,\ j\in \{1, 2, \dots, n-1\}$ such that
\[\tilde{s}=\arg\max_{s_i\in S_i}\min_{s_n\in S_n}u_i(\tilde{s}, \dots, s_i, \dots, \tilde{s}, \dots, s_n).\]
Then, we have
\begin{align}
&\max_{s_i\in S_i}\min_{s_n\in S_n}u_i(\tilde{s}, \dots, s_i, \dots, \tilde{s}, \dots, s_n)=\min_{s_n\in S_n}u_i(\tilde{s}, \dots, \tilde{s}, \dots, \tilde{s}, \dots, s_n)\label{t2-1}\\
&=\min_{s_n\in S_n}\max_{s_i\in S_i}u_i(\tilde{s}, \dots, s_i, \dots, \tilde{s}, \dots, s_n).\notag
\end{align}
Since
\[u_i(\tilde{s}, \dots, \tilde{s}, \dots, \tilde{s}, \dots, s_n)\leq \max_{s_i\in S_i}u_i(\tilde{s}, \dots, s_i, \dots, \tilde{s}, \dots, s_n),\]
and
\[\min_{s_n\in S_n}u_i(\tilde{s}, \dots, \tilde{s}, \dots, \tilde{s}, \dots, s_n)=\min_{s_n\in S_n}\max_{s_i\in S_i}u_i(\tilde{s}, \dots, s_i, \dots, \tilde{s}, \dots, s_n),\]
we get
\begin{equation}
\arg\min_{s_n\in S_n}u_i(\tilde{s}, \dots, \tilde{s}, \dots, \tilde{s}, \dots, s_n)=\arg\min_{s_n\in S_n}\max_{s_i\in S_i}u_i(\tilde{s}, \dots, s_i, \dots, \tilde{s}, \dots, s_n).\label{t2-2}
\end{equation}
Since the game is zero-sum,
\[\sum_{i=1}^{n-1}u_i(\tilde{s}, \dots,\tilde{s}, \dots, \tilde{s}, \dots, s_n)=(n-1)u_i(\tilde{s}, \dots,\tilde{s}, \dots, \tilde{s}, \dots, s_n)=-u_n(\tilde{s}, \dots,\tilde{s}, \dots, \tilde{s}, \dots, s_n).\]
Therefore,
\[\arg\min_{s_n\in S_n}u_i(\tilde{s}, \dots,\tilde{s}, \dots, \tilde{s}, \dots, s_n)=\arg\max_{s_n\in S_n}u_n(\tilde{s}, \dots,\tilde{s}, \dots, \tilde{s}, \dots, s_n).\]
Let
\begin{equation}
\hat{s}_n=\arg\min_{s_n\in S_n}u_i(\tilde{s}, \dots,\tilde{s}, \dots, \tilde{s}, \dots, s_n)=\arg\max_{s_n\in S_n}u_n(\tilde{s}, \dots,\tilde{s}, \dots, \tilde{s}, \dots, s_n).\label{t2-3}
\end{equation}
Then, from (\ref{t2-1}) and (\ref{t2-2})
\begin{align*}
&\min_{s_n\in S_n}\max_{s_i\in S_i}u_i(\tilde{s}, \dots,{s}_i, \dots, \tilde{s}, \dots, {s}_n)=\max_{s_i\in S_i}u_i(\tilde{s}, \dots,{s}_i, \dots, \tilde{s}, \dots, \hat{s}_n)\\%\label{t2-4}\\
&=\min_{s_n\in S_n}u_i(\tilde{s}, \dots,\tilde{s}, \dots, \tilde{s}, \dots, s_n)=u_i(\tilde{s}, \dots,\tilde{s}, \dots, \tilde{s}, \dots, \hat{s}_n).\notag
\end{align*}
Thus,
\begin{equation}
\arg\max_{s_i\in S_i}u_i(\tilde{s}, \dots,{s}_i, \dots, \tilde{s}, \dots, \hat{s}_n)=\tilde{s}\ \mathrm{for\ all}\ i\in \{1, 2, \dots, n-1\}. \label{t2-4}
\end{equation}
(\ref{t2-3}) and (\ref{t2-4}) mean that $(s_1, s_2, \dots, s_{n-1},s_n)=(\tilde{s},\tilde{s},\dots, \tilde{s},\hat{s}_n)$ is a Nash equilibrium in which only Player $n$ may choose a different strategy. 
\end{proof}

\section{Example of relative profit maximizing four-firms oligopoly}

Consider a four-players game. Suppose that the payoff functions of the players are
\begin{align*}
\pi_A=&(a-x_A-x_B-x_C-x_D)x_A-c_Ax_A-\frac{1}{3}[(a-x_A-x_B-x_C-x_D)x_B-c_Bx_B\\
&+(a-x_A-x_B-x_C-x_D)x_C-c_Cx_C+(a-x_A-x_B-x_C-x_D)x_D-c_Dx_D],
\end{align*}
\begin{align*}
\pi_B=&(a-x_A-x_B-x_C-x_D)x_B-c_Bx_B-\frac{1}{3}[(a-x_A-x_B-x_C-x_D)x_A-c_Ax_A\\
&+(a-x_A-x_B-x_C-x_D)x_C-c_Cx_C+(a-x_A-x_B-x_C-x_D)x_D-c_Dx_D],
\end{align*}
\begin{align*}
\pi_C=&(a-x_A-x_B-x_C-x_D)x_C-c_Cx_C-\frac{1}{3}[(a-x_A-x_B-x_C-x_D)x_A-c_Ax_A\\
&+(a-x_A-x_B-x_C-x_D)x_B-c_Bx_B+(a-x_A-x_B-x_C-x_D)x_D-c_Dx_D],
\end{align*}
\begin{align*}
\pi_D=&(a-x_A-x_B-x_C-x_D)x_D-c_Dx_D-\frac{1}{3}[(a-x_A-x_B-x_C-x_D)x_A-c_Ax_A\\
&+(a-x_A-x_B-x_C-x_D)x_B-c_Bx_B+(a-x_A-x_B-x_C-x_D)x_C-c_Cx_C].
\end{align*}
This is a model of relative profit maximization in a four firms Cournot oligopoly with constant marginal costs and zero fixed cost producing a homogeneous good. $x_i,i=A, B, C, D$, are the outputs of the firms. The conditions for maximization of $\pi_i,\ i=A, B, C, D$, are
\[\frac{\partial \pi_A}{\partial x_A}=a-2x_A-(x_B+x_C+x_D)-c_A+\frac{1}{3}(x_B+x_C+x_D)=0,\]
\[\frac{\partial \pi_B}{\partial x_B}=a-2x_B-(x_A+x_C+x_D)-c_B+\frac{1}{3}(x_A+x_C+x_D)=0,\]
\[\frac{\partial \pi_C}{\partial x_C}=a-2x_C-(x_A+x_B+x_D)-c_C+\frac{1}{3}(x_A+x_B+x_D)=0,\]
\[\frac{\partial \pi_D}{\partial x_D}=a-2x_C-(x_A+x_B+x_C)-c_D+\frac{1}{3}(x_A+x_B+x_C)=0.\]
The Nash equilibrium strategies are
\begin{align}
\begin{cases}
&x_A=\frac{2a-5c_A+c_B+c_C+c_D}{8},\label{nash}\\
&x_B=\frac{2a-5c_B+c_A+c_C+c_D}{8},\\
&x_C=\frac{2a-5c_C+c_A+c_B+c_D}{8},\\
&x_D=\frac{2a-5c_D+c_A+c_B+c_C}{8}.
\end{cases}
\end{align}

Next consider maximin and minimax strategies about Player A and Player D. The condition for minimization of $\pi_A$ with respect to $x_D$ is $\frac{\partial \pi_A}{\partial x_D}=0$. Denote $x_D$ which satisfies this condition by $x_D(x_A, x_B,x_C)$, and substitute it into $\pi_A$. Then, the condition for maximization of $\pi_A$ with respect to $x_A$ given $x_D(x_A, x_B,x_C)$, $x_B$ and $x_C$ is
\[\frac{\partial \pi_A}{\partial x_A}+\frac{\partial \pi_A}{\partial x_D}\frac{\partial x_D}{\partial x_A}=0.\]
It is denoted by $\arg\max_{x_A}\min_{x_D}\pi_A$. The condition for maximization of $\pi_A$ with respect to $x_A$ is $\frac{\partial \pi_A}{\partial x_A}=0$. Denote $x_A$ which satisfies this condition by $x_A(x_B, x_C, x_D)$, and substitute it into $\pi_A$. Then, the condition for minimization of $\pi_A$ with respect to $x_D$ given $x_A(x_B, x_C, x_D)$ is
\[\frac{\partial \pi_A}{\partial x_D}+\frac{\partial \pi_A}{\partial x_A}\frac{\partial x_A}{\partial x_D}=0.\]
It is denoted by $\arg\min_{x_D}\max_{x_A}\pi_A$. In our example we obtain
\[\arg\max_{x_A}\min_{x_D}\pi_A=\frac{2a-3c_A+c_D}{8},\ \arg\min_{x_D}\max_{x_A}\pi_A=\frac{6a-3c_A-3c_D-8x_B-8x_C}{8}.\]
Similarly, we get the following results.
\[\arg\max_{x_B}\min_{x_D}\pi_B=\frac{2a-3c_B+c_D}{8},\]
\[\arg\min_{x_D}\max_{x_B}\pi_B=\frac{6a-3c_B-3c_D-8x_A-8x_C}{8},\]
\[\arg\max_{x_C}\min_{x_D}\pi_C=\frac{2a-3c_C+c_D}{8},\]
\[\arg\min_{x_D}\max_{x_C}\pi_C=\frac{6a-3c_C-3c_D-8x_A-8x_B}{8}.\]
If $c_C=c_B=c_A$,
\[\arg\max_{x_A}\min_{x_D}\pi_A=\arg\max_{x_B}\min_{x_D}\pi_B=\arg\max_{x_C}\min_{x_D}\pi_C=\frac{2a-3c_A+c_D}{8}.\]
These are equal to the Nash equilibrium strategies for Firms A, B and C with $c_C=c_B=c_A$ and $c_D\neq c_A$.

When $x_A=x_B=x_C=\frac{2a-3c_A+c_D}{8}$, we have
\[\arg\min_{x_D}\max_{x_A}\pi_A=\arg\min_{x_D}\max_{x_B}\pi_B=\arg\min_{x_D}\max_{x_C}\pi_C=\frac{2a-5c_D+3c_A}{8}.\]
These are equal to the Nash equilibrium strategy for Firm D  with $c_C=c_B=c_A$.

On the other hand, if $c_B=c_A$ and $c_C=c_D$, we have
\[\arg\max_{x_A}\min_{x_D}\pi_A=\arg\max_{x_B}\min_{x_D}\pi_B=\frac{2a-3c_A+c_D}{8}.\]
This is not equal to the Nash equilibrium strategies for Firms A and B, with $c_B=c_A$ and $c_C=c_D\neq c_A$ which are
\[x_A=x_B=\frac{a-2c_A+c_D}{4}\neq \frac{2a-3c_A+c_D}{8}.\]
%Therefore, the maximin strategy, the minimax strategy and the Nash equilibrium strategy for all players are equal.

\section{Concluding Remark}

In this paper we have examined the relation between Sion's minimax theorem for a continuous function and a Nash equilibrium in an asymmetric multi-players zero-sum game in which only one player is different from other players. We have shown that the following two statements are equivalent.
\begin{enumerate}
	\item The existence of a Nash equilibrium,  which is symmetric for players other than one player, implies Sion's minimax theorem for pairs of this player and one of other players with symmetry for the other players.
	\item Sion's minimax theorem for pairs of one player and one of other players with symmetry for the other players implies the existence of a Nash equilibrium which is symmetric for the other players.
\end{enumerate}

As we have shown in Appendix, if there are two aliens, this equivalence does not hold.

\end{document}